\newtheorem{definition}{Definition}[section]
\newtheorem{theorem}{Theorem}[section]
\newtheorem{lemma}[theorem]{Lemma}
\newtheorem{example}{Example}[section]
\newtheorem{remark}{Remark}[section]
\newtheorem{prop}{Proposition}
\title{A note on an OACF-preserving operation based on Parker's Transformation\thanks{This paper was presented in part at the 2019 Ninth International
Workshop on Signal Design and its Applications in Communications (IWSDA), Dongguan, China, Oct. 20-24, 2019. However, the authors misunderstood that the decimation operation preserves the odd-periodic autocorrelation function (OACF) values. The present extended version clarifies this point and gives a complete description of three OACF-preserving opertaions, which are exactly the counterparts of the three operations preserving periodic autocorrelation function (PACF) values.}}
\author{
  Geyang Wang \\
  Department of Computer Science and Engineering\\
  Southern University of Science and Technology\\
  Shenzhen, Guangdong 518055, China \\
 \texttt{11510050@mail.sustech.edu.cn} \\
   \And
  Qi Wang\\
  Department of Computer Science and Engineering\\
  Southern University of Science and Technology\\
  Shenzhen, Guangdong 518055, China\\
  \texttt{wangqi@sustech.edu.cn} \\
}
\begin{document}
\maketitle

\begin{abstract}
Binary sequences with low odd-periodic correlation magnitudes have found important applications in communication systems. It is well known that the nega-cyclic shift and negation preserve the odd-periodic autocorrelation function (OACF) values in general. In this paper, we define a new operation based on Parker's transformation, which also preserves the OACF values of binary sequences. This enables us to classify Parker's 16 cases into 8 ones, and may possibly further allow to classify all constructions based on Parker's transformation.
\end{abstract}

\section{Introduction}

A periodic sequence is called \textbf{binary} if every element of the sequence is from $\mathbb{Z}_2 = \{0,1\}$. For two binary sequences $\mathbf{a} = (a(i))$ and $\mathbf{b} = (b(i))$ of period $N$, where the sequence indices $i$ are taken modulo $N$. The {\em odd-periodic correlation function} of $\mathbf{a}$ and $\mathbf{b}$ at the shift $\tau$ is defined as 

\begin{equation}\label{def: odd periodic correlation}
  \mathcal{\widehat{R}}_{\mathbf{a},\mathbf{b}}(\tau) = \sum_{i=0}^{N-1}(-1)^{a(i) - b(i + \tau) + \left \lfloor \frac{i+\tau}{N} \right \rfloor}, \text{ } 0\le \tau < N
\end{equation}
where $\lfloor \frac{i+\tau}{N} \rfloor$ denotes the integer part of $\frac{i+\tau}{N}$. 
When the two sequences $\mathbf{a}$ and $\mathbf{b}$ are identical, the odd-periodic correlation function $\mathcal{\widehat{R}}_{\mathbf{a},\mathbf{b}}(\tau)$ is called {\em negaperiodoc autocorrelation function} (NACF) or {\em odd-periodic autocorrelation function} (OACF) of $\mathbf{a}$, and simply denoted by $\mathcal{\widehat{R}_{\mathbf{a}}}(\tau)$. The (even)-{\em periodic correlation function} of $\mathbf{a}$ and $\mathbf{b}$ at the shift $\tau$ is defined as 
\begin{equation}\label{def: periodic correlation}
   \mathcal{R}_{\mathbf{a},\mathbf{b}}(\tau) = \sum_{i=0}^{N-1}(-1)^{a(i) - b(i + \tau)},\text{ }  0\le \tau < N.
\end{equation}
The (even)-{\em periodic autocorrelation function} (PACF) of $\mathbf{a}$ is $\mathcal{R}_{\mathbf{a},\mathbf{a}}(\tau)$ and denoted by $\mathcal{R}_{\mathbf{a}}(\tau)$. Binary sequences with low PACF magnitudes have been extensively investigated, and are closely related to combinatorial objects (see, for example, a survey \cite{cai2009binary}). 
Very recently, there was progress on binary sequences with low OACF magnitudes (see \cite{yang2015binary},\cite{li2019three},\cite{yang2017generic}). The lower bound of OACF magnitudes was given by Pott \cite{pott2008two}. In \cite{yang2015binary}, the authors give an alternative proof for the lower bound.

\begin{prop}\cite[Theorem 2]{yang2015binary}
  Let $\mathbf{a}$ be a binary sequence of period $N$, then
  \begin{equation} \label{equ : lower bound of odd periodic autocorrelation}
    \widehat{\theta}(\mathbf{a}) \ge 
    \begin{cases}
      1 , & \text{if } N \equiv 1 \mod 2, \\
      2 , & \text{if } N \equiv 0 \mod 2.
    \end{cases}
  \end{equation}
  where $\widehat{\theta}(\mathbf{a}) = \max_{0 < \tau < N}\left | \mathcal{\widehat{R}_{\mathbf{a}}(\tau)} \right |$.
\end{prop}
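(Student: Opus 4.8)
The plan is to treat the two parities separately, using in both the elementary observation that $\widehat{\mathcal{R}}_{\mathbf{a}}(\tau)$ is a sum of $N$ terms each equal to $\pm 1$, so that $\widehat{\mathcal{R}}_{\mathbf{a}}(\tau)\equiv N \pmod 2$ for every $\tau$. If $N$ is odd this already settles the claim: each value $\widehat{\mathcal{R}}_{\mathbf{a}}(\tau)$ is odd, hence nonzero, so $|\widehat{\mathcal{R}}_{\mathbf{a}}(1)|\ge 1$ and $\widehat{\theta}(\mathbf{a})\ge 1$ (here $N\ge 3$ is tacit; $N=1$ is vacuous). For even $N$ the same remark shows every $\widehat{\mathcal{R}}_{\mathbf{a}}(\tau)$ is even, so it suffices to exhibit one shift $\tau$ with $0<\tau<N$ and $\widehat{\mathcal{R}}_{\mathbf{a}}(\tau)\neq 0$; then automatically $|\widehat{\mathcal{R}}_{\mathbf{a}}(\tau)|\ge 2$.

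To locate such a $\tau$ I will first record the decomposition of the OACF into aperiodic pieces. Writing $u(i)=(-1)^{a(i)}$ and using that $\lfloor (i+\tau)/N\rfloor\in\{0,1\}$ when $0\le i,\tau<N$, one gets, for $1\le\tau\le N-1$,
\[
  \widehat{\mathcal{R}}_{\mathbf{a}}(\tau)=\sum_{i=0}^{N-1}u(i)u\big((i+\tau)\bmod N\big)-2\sum_{i=N-\tau}^{N-1}u(i)u(i+\tau-N)=\mathcal{R}_{\mathbf{a}}(\tau)-2\,C_{\mathbf{a}}(N-\tau),
\]
where $C_{\mathbf{a}}(s)=\sum_{i=0}^{N-1-s}u(i)u(i+s)$ is the aperiodic autocorrelation. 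The point of choosing $\tau$ small is that the correction term $C_{\mathbf{a}}(N-\tau)$ then involves only $\tau$ summands, so it takes very few values, while a telescoping argument pins down the parity of the number of sign changes in $\mathcal{R}_{\mathbf{a}}(\tau)$.

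Concretely, for $\tau=1$ set $d_i=u(i)u\big((i+1)\bmod N\big)$, $i=0,\dots,N-1$; then $\mathcal{R}_{\mathbf{a}}(1)=\sum_i d_i$ and $\widehat{\mathcal{R}}_{\mathbf{a}}(1)=\sum_i d_i-2d_{N-1}$. Since $\prod_{i=0}^{N-1}d_i=\prod_i u(i)^2=1$ (each $u(j)$ occurs in exactly two factors when $N\ge 2$), the number $k$ of indices with $d_i=-1$ is even, so $\mathcal{R}_{\mathbf{a}}(1)=N-2k$ with $k$ even. If $N\equiv 0\pmod 4$, then $\widehat{\mathcal{R}}_{\mathbf{a}}(1)=0$ would force $N-2k=2d_{N-1}\in\{-2,2\}$, i.e.\ $k=N/2\mp 1$, an odd number — contradicting $k$ even; hence $\widehat{\mathcal{R}}_{\mathbf{a}}(1)\neq 0$. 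For $N\equiv 2\pmod 4$ (so $N\ge 6$) I run the identical argument at $\tau=2$: with $e_i=u(i)u\big((i+2)\bmod N\big)$ one has $\widehat{\mathcal{R}}_{\mathbf{a}}(2)=\sum_i e_i-2(e_{N-2}+e_{N-1})$, again $\prod_i e_i=1$ for $N\ge 3$ so the number $k'$ of $-1$'s among the $e_i$ is even, and $\widehat{\mathcal{R}}_{\mathbf{a}}(2)=0$ would force $\sum_i e_i\in\{-4,0,4\}$, i.e.\ $k'\in\{N/2-2,\,N/2,\,N/2+2\}$, all odd — contradiction. Either way we obtain the required shift inside $(0,N)$, and $\widehat{\theta}(\mathbf{a})\ge 2$.

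I expect the main obstacle to be not any single computation but the need for this case split together with the two bookkeeping facts that make it work: the displayed identity $\widehat{\mathcal{R}}_{\mathbf{a}}(\tau)=\mathcal{R}_{\mathbf{a}}(\tau)-2C_{\mathbf{a}}(N-\tau)$ (careful tracking of the floor term and of the ``wrap-around'' summands), and the telescoping identities $\prod_i d_i=\prod_i e_i=1$ that force an even count of sign changes. One should also flag the genuinely degenerate case $N=2$, where $\widehat{\mathcal{R}}_{\mathbf{a}}(1)\equiv 0$ and the inequality must be read in the regime $N\ge 3$ in which the OACF problem is of interest; everything else is parity arithmetic.
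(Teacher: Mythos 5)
Your argument is correct, and it is necessarily a different route from the paper's, because the paper does not prove this proposition at all: it is quoted from Pott and from Yang--Tang--Zhou (\cite[Theorem 2]{yang2015binary}), so there is no in-paper proof to compare against. Your proof is a self-contained elementary alternative: the parity observation $\widehat{\mathcal{R}}_{\mathbf{a}}(\tau)\equiv N \pmod 2$ settles odd $N$, and for even $N$ your identities $\widehat{\mathcal{R}}_{\mathbf{a}}(1)=\sum_i d_i-2d_{N-1}$ and $\widehat{\mathcal{R}}_{\mathbf{a}}(2)=\sum_i e_i-2(e_{N-2}+e_{N-1})$, combined with $\prod_i d_i=\prod_i e_i=1$ (so the number of $-1$'s is even), correctly rule out the value $0$ at $\tau=1$ when $4\mid N$ and at $\tau=2$ when $N\equiv 2\pmod 4$; I checked the bookkeeping (wrap-around indices, the count $k=N/2\mp1$, resp.\ $k'\in\{N/2-2,N/2,N/2+2\}$) and it is sound, and the case split is genuinely needed, since for $N\equiv 2\pmod 4$ one can have $\widehat{\mathcal{R}}_{\mathbf{a}}(1)=0$. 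What this buys over the cited proofs (relative difference sets in Pott; the transformation-based argument in \cite{yang2015binary}) is that it exhibits explicit shifts witnessing the bound with only parity arithmetic. Your flag of $N=2$ is also apt and worth keeping: there the only shift is $\tau=1=N/2$, and by the symmetry $\widehat{\mathcal{R}}_{\mathbf{a}}(\tau)=-\widehat{\mathcal{R}}_{\mathbf{a}}(N-\tau)$ (Lemma \ref{lemma : OACF symmerties}) one has $\widehat{\theta}(\mathbf{a})=0$, so the stated bound must be read with $N>2$ in the even case; this is a defect of the transcribed statement, not of your proof.
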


The periodic binary sequence $\mathbf{a}$ is called {\em odd-optimal} if the lower bound in Eq. \eqref{equ : lower bound of odd periodic autocorrelation} is met. 

Binary sequences with low OACF magnitudes can be derived from those with low PACF magnitudes by Parker's transformation \cite{parker2001even}. The Parker's transformation is defined as follows: Let $\mathbf{s}$ be a binary sequence with length $N$. Define $\mathbf{u} = \mathbf{s} || (\mathbf{s} \oplus 1) = [s_0, s_1, s_2, \dots, s_{N-1}, s_0 + 1, s_1 + 1, s_2 + 1, \dots, s_{N-1} + 1]$, where $||$ denotes the concatenation. Notice that for $0\le \tau < N$, we have

 \begin{eqnarray*}
     \mathcal{R}_{\mathbf{u}}(\tau) &=&  \sum_{k=0}^{2N-1} (-1)^{\mathbf{u}(k) - \mathbf{u}(k+\tau)}  \\
        &=& \sum_{k=0}^{N-\tau-1} (-1)^{\mathbf{u}(k) - \mathbf{u}(k+\tau)} + \sum_{k=N-\tau}^{N-1} (-1)^{\mathbf{u}(k) - \mathbf{u}(k+\tau)} \\
         &&+ \sum_{k=N}^{2N-\tau-1}(-1)^{\mathbf{u}(k) - \mathbf{u}(k+\tau)}  + \sum_{k=2N-\tau}^{2N-1} (-1)^{\mathbf{u}(k) - \mathbf{u}(k+\tau)}\\
        &=& \sum_{k=0}^{N-\tau-1} (-1)^{\mathbf{s}(k) - \mathbf{s}(k+\tau)} +
        \sum_{k=N-\tau}^{N-1} (-1)^{\mathbf{s}(k) - \mathbf{s}(k+\tau)-1} \\
        &&+  \sum_{k=N}^{2N-\tau-1}(-1)^{\mathbf{s}(k) + 1 - \mathbf{s}(k+\tau) - 1} + \sum_{k=2N-\tau}^{2N-1} (-1)^{\mathbf{s}(k) + 1 - \mathbf{s}(k+\tau)}\\
        &=& 2(\sum_{k=0}^{N-\tau-1} (-1)^{\mathbf{s}(k) - \mathbf{s}(k+\tau)} +
        \sum_{k=N-\tau}^{N-1} (-1)^{\mathbf{s}(k) - \mathbf{s}(k+\tau)-1})\\
        &=& 2\mathcal{\widehat{R}}_{\mathbf{s}}(\tau).
 \end{eqnarray*}

The OACF values of the sequence $\mathbf{s}$ can thereby be computed by the PACF values of the sequence $\mathbf{u}$ via the following equation.

\begin{equation} \label{equ : Parker transformation}
  \mathcal{\widehat{R}}_{\mathbf{s}}(\tau) = \frac{\mathcal{R}_{\mathbf{u}}(\tau)}{2},
\end{equation}
where $0 \le \tau < N$. 
Thus, instead of constructing binary sequence $\mathbf{s}$ with low OACF magnitudes directly, Parker \cite{parker2001even}, Li and Yang \cite{li2019three} constructed sequences with low OACF magnitudes by first constructing sequences with low PACF magnitudes of the form  $\mathbf{u} = \mathbf{s} || (\mathbf{s} \oplus 1)$, and it follows that $\mathbf{s}$ has low OACF magnitudes.

In \cite{parker2001even}, Parker mentioned that negation, cyclic shift and decimation on $\mathbf{u}$ can be translated into operations on $\mathbf{s}$, and the translated operations preserve the distribution of OACF magnitudes. The transformation $\delta$ on periodic binary sequence is called OACF-preserving if $\mathbf{s}$ and $\delta(\mathbf{s'})$ have the identical distribution of OACF values, where $\mathbf{s},\mathbf{s'}$ are periodic sequences with same period.
The new decimation defined in \cite{wang2019oacf} was proved to be OACF-preserving, based on Parker's transformation. However, the authors misunderstood the three OACF-equivalent operations mentioned by Parker in \cite{parker2001even}, and claimed this new decimation (here called {\em nega-decimation}) constitutes a fourth OACF-preserving operation. In the present paper, we restate the main results in \cite{wang2019oacf}, and make it clear that there are three OACF-preserving operations (negation, nega-cyclic shift, nega-decimation). They are exactly the counterparts of the three PACF-preserving operations, i.e., negation, cyclic shift, and decimation. Furthermore, we classify Parker's 16 cases \cite{parker2001even} into 8 ones, based on the equivalence relation defined on the three OACF-preserving operations. 

The remainder of the present paper is organized as follows. In Section \ref{sec: Section 2}, we define nega-decimation and prove that nega-decimation preserves the OACF values distribution.  Moreover, we give an example to show that decimation is not OACF-preserving. In Section \ref{Sec: section 3}, by applying nega-decimation, we are able to classify the sequences conjectured by Parker and confirmed by Li and Yang \cite{yang2017generic} into fewer cases. Finally, Section \ref{Sec: Conclusion} concludes this paper. 

\section{Nega-decimation} \label{sec: Section 2}
In this section, we first review the four well-known operations for periodic binary sequences: negation, cyclic shift, nega-cyclic shift and decimation. Next, we show that decimation is in general not OACF-preserving by giving an example. We then define the nega-decimation and prove that nega-decimation is OACF-preserving. Furthermore, by giving an example, it is shown that the nega-decimation operation cannot be obtained by any combination of negation and nega-cyclic shift in general. 

We review the definition of negation, cyclic shift, decimation in the following definitions.
\begin{definition} \label{def: negation}
  Let $\mathbf{a}$ be a binary sequence of period $N$. The {\em negation} on $\mathbf{a}$ is defined as 
  $$
  [\mathbf{a}(0) + 1, \mathbf{a}(1) + 1,\dots, \mathbf{a}(N-1) + 1].
  $$ and denoted by  $\mathbf{a}\oplus 1 $ or $\bar{\mathbf{a}}$.
\end{definition}

\begin{definition} \label{def : cyclic shift}
  Let $\mathbf{a}$ be a binary sequence of period $N$. The {\em cyclic shift} $\tau$, denoted by $L^\tau$, is defined as
  $$
    L^{\tau}(\mathbf{a}) = [\mathbf{a}(\tau),\mathbf{a}(\tau + 1), \dots, \mathbf{a}(N-1), \mathbf{a}(0),\dots,\mathbf{a}(\tau-1)],
  $$
  where $0\le \tau< N$.
\end{definition}

\begin{definition} \label{def : decimation}
  Let $\mathbf{a}$ be a binary sequence of period $N$. For any integer $d$ with $\gcd(d,N)=1$, the {\em $d$-decimation} on $\mathbf{a}$ is a mapping that maps $\mathbf{a}$ to $\mathbf{b}$ with $\mathbf{b}(i) = \mathbf{a}(di)$ for all integer $i$, where $\mathbf{b}$ is a binary sequence of period $N$. The $d$-decimation is denoted by $D_d(\cdot)$.
\end{definition}

\begin{definition} \label{def : nega-cyclic shift}
  Let $\mathbf{a}$ be a binary sequence of period $N$. The {\em nega-cyclic shift $\tau$} or {\em odd shift $\tau$}, which is denoted by $\widehat{L}^\tau(\cdot)$, is defined as 
  $$
  \begin{aligned}
    \widehat{L}^{\tau}(\mathbf{a}) =  [ & \mathbf{a}(\tau),\mathbf{a}(\tau + 1), \dots, \mathbf{a}(N-1),\\
    & \mathbf{a}(0)+1,\dots, \mathbf{a}(\tau-1)+1],
  \end{aligned}
  $$
where $0 \le \tau < N$.
\end{definition}

By definition, it is routine to check that the cyclic shift, decimation, and negation operations all preserve the PACF value distribution. Likewise, the nega-cyclic shift, and negation operations preserve the OACF value distribution. Notice that decimation is not OACF-preserving. To explain this, we give the following simple example. 
\begin{example}
Define $\mathbf{s_1}$ and $\mathbf{s_2}$ as:
$$
\begin{aligned}
\mathbf{s}_1 &= [1, 1, 1, 0, 1, 0, 0, 0, 1, 1 ] \\
\mathbf{s}_2 &= [1, 0, 0, 1, 1, 0, 1, 1, 1, 0 ],
\end{aligned}
$$

where $\mathbf{s}_2$ is the $3$-decimation version of $s_1$. The OACF values of $s_1$ are 
$$[10, 0, -2, -4, 2, 0, -2, 4, 2, 0 ],$$ the OACF values of $s_2$ are 
$$[10, 0, -6, 0, 6, 0, -6, 0, 6, 0 ].$$
\end{example}

We now define the nega-decimation.
\begin{definition}\label{def: new decimation 1}
  Let $\mathbf{s}$ be a periodic binary sequence with period $N$. Define $\mathbf{u} = s||(s \oplus 1)$. Let $D_d(\mathbf{u}) = [\mathbf{u}(0),\mathbf{u}(1),\dots,\mathbf{u}(2N-1)]$ for some positive integer $d$ co-prime with $2N$. The nega $d$-decimation on $\mathbf{s}$ induces a binary sequence $\mathbf{s'}$ with period $[\mathbf{u}(0),\mathbf{u}(1),\dots,\mathbf{u}(N-1)]$, and called {\em nega-decimation}, denoted by $\tilde{D}(s)_d = \mathbf{s'}$.
\end{definition}

By Definition \ref{def: new decimation 1}, we have the following equation:
\begin{equation}\label{equ: new decimation 2}
    \tilde{D}_d(\mathbf{s})(\tau) = \mathbf{s}(d\tau) + \left \lfloor \frac{d\tau}{N} \right \rfloor,
\end{equation}
where $\tau = 0,1,\dots N-1$. Note that Eq. \eqref{equ: new decimation 2} in fact defines the nega-decimation, and is exactly the counterpart of the normal decimation in Definition \ref{def : decimation}. We gave the following example to demonstrate the new decimation operation.
\begin{example}\label{example : Hall 31}
   Let $\mathbf{s}$ be a binary sequence of period 31.
$$
\begin{aligned}
  \mathbf{s} = &  [ 0, 1, 1, 1, 1, 0, 1, 0, 1, 0, 0, 0, 1, 0, 0, 1, 1, 1, 0, 0, 0, 0, 0, 1, 1, 0, 0, 1, 0, 1, 1 ].
\end{aligned}
$$
Then we have 

$$
\begin{aligned}
  \mathbf{u} = \mathbf{s||(s \oplus 1)} = & [ 0, 1, 1, 1, 1, 0, 1, 0, 1, 0, 0, 0, 1, 0, 0, 1, 1, 1, 0, 0, 0, 0, 0, 1, 1, 0, 0, 1, 0, 1, 1, \\
& 1, 0, 0, 0, 0, 1, 0, 1, 0, 1, 1, 1, 0, 1, 1, 0, 0, 0, 1, 1, 1, 1, 1, 0, 0, 1, 1, 0, 1, 0, 0].
\end{aligned}
$$
Applying 3-decimation on $\mathbf{u}$, we have
$$
\begin{aligned}
    D_3(\mathbf{u}) = &[ 0, 1, 1, 0, 1, 1, 0, 0, 1, 1, 1, 0, 1, 0, 1, 1, 0, 1, 0, 1, 0, 1, 1, 0, 0, 0, 1, 0, 0, 0, 0,  \\
    & 1, 0, 0, 1, 0, 0, 1, 1, 0, 0, 0, 1, 0, 1, 0, 0, 1, 0, 1, 0,  
1, 0, 0, 1, 1, 1, 0, 1, 1, 1, 1 ].
\end{aligned}
$$
We truncate the first half to obtain $\mathbf{s'} = \tilde{D}_3(\mathbf{s})$:
$$
\mathbf{s'} =  [ 0, 1, 1, 0, 1, 1, 0, 0, 1, 1, 1, 0, 1, 0, 1, 1, 0, 1, 0, 1, 0, 1, 1, 0, 0, 0, 1, 0, 0, 0, 0 ] .
$$
The OACF values of $\mathbf{s}$ are 
$$
[31, 1, -1, -7, -1, -7, -1, 5, -1, 1, -9, 5, 3, 5, 7, 9, -9, -7, -5, -3, -5, 9, -1, 1, -5, 1, 7, 1, 7, 1, -1 ],
$$
and the OACF values of $\mathbf{s'}$ are
$$
[ 31, -7, -1, 1, 3, 9, -5, 9, -5, 1, -1, 1, 7, 1, -5, -7, 7, 5, -1, -7, -1, 1,
-1, 5, -9, 5, -9, -3, -1, 1, 7 ].
$$
Therefore, the OACF distributions of $\mathbf{s}$ and $\mathbf{s'}$ are both
$$
\{ *  (-9)^2, (-7)^3, (-5)^3, -3, (-1)^6, 1^6, 3, 5^3, 7^3, 9^2, 31 * \},
$$
where the exponent denotes the multiplicity in the multiset.
\end{example}

\begin{theorem}\label{thm: nega-decimation}
  The nega-decimation operation is OACF-preserving for periodic binary sequences, and it cannot be obtained by any combination of negation and nega-cyclic shift in general. 
\end{theorem}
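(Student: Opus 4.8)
The plan is to route the entire statement through Parker's transformation, reducing it to elementary facts about the periodic autocorrelation of the length-$2N$ sequence $\mathbf{u}=\mathbf{s}\|(\mathbf{s}\oplus 1)$. Everything rests on two structural properties of $\mathbf{u}$: (i) $\mathbf{u}(k+N)=\mathbf{u}(k)+1$ for all $k$, which gives $\mathcal{R}_{\mathbf{u}}(t+N)=-\mathcal{R}_{\mathbf{u}}(t)$; and (ii) the standard symmetry $\mathcal{R}_{\mathbf{u}}(-t)=\mathcal{R}_{\mathbf{u}}(t)$. Combining (i) with Eq.~\eqref{equ : Parker transformation}, i.e.\ $\mathcal{\widehat{R}}_{\mathbf{s}}(\tau)=\tfrac12\mathcal{R}_{\mathbf{u}}(\tau)$ for $0\le\tau<N$, one also gets the OACF anti-symmetry $\mathcal{\widehat{R}}_{\mathbf{a}}(N-\tau)=-\mathcal{\widehat{R}}_{\mathbf{a}}(\tau)$ for $0<\tau<N$, valid for every binary sequence $\mathbf{a}$ of period $N$ (so in particular $\mathcal{\widehat{R}}_{\mathbf{a}}(N/2)=0$ when $N$ is even).

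For the OACF-preserving claim, the first step is to check that $\mathbf{s'}:=\tilde{D}_d(\mathbf{s})$ is again of Parker type. As $\gcd(d,2N)=1$ forces $d$ odd, $dN\equiv N\pmod{2N}$, so property (i) gives $D_d(\mathbf{u})(\tau+N)=\mathbf{u}(d\tau+dN)=\mathbf{u}(d\tau+N)=\mathbf{u}(d\tau)+1=D_d(\mathbf{u})(\tau)+1$; hence $D_d(\mathbf{u})=\mathbf{s'}\|(\mathbf{s'}\oplus 1)$ (consistent with Eq.~\eqref{equ: new decimation 2}), and Eq.~\eqref{equ : Parker transformation} applies to $\mathbf{s'}$, giving $\mathcal{\widehat{R}}_{\mathbf{s'}}(\tau)=\tfrac12\mathcal{R}_{D_d(\mathbf{u})}(\tau)$ for $0\le\tau<N$. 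A one-line index substitution shows $\mathcal{R}_{D_d(\mathbf{u})}(t)=\mathcal{R}_{\mathbf{u}}(dt)$ for all $t$, so the multisets $\{\mathcal{R}_{D_d(\mathbf{u})}(t):t\in\mathbb{Z}_{2N}\}$ and $\{\mathcal{R}_{\mathbf{u}}(t):t\in\mathbb{Z}_{2N}\}$ coincide. Writing $M_{\mathbf{a}}$ for the OACF multiset $\{\mathcal{\widehat{R}}_{\mathbf{a}}(\tau):0\le\tau<N\}$, splitting each of the two multisets above into its lower half $\{0,\dots,N-1\}$ and the negative thereof (by (i)) and rescaling via Eq.~\eqref{equ : Parker transformation}, one obtains
\[
M_{\mathbf{s'}}\uplus(-M_{\mathbf{s'}})=M_{\mathbf{s}}\uplus(-M_{\mathbf{s}}).
\]
Finally, $M_{\mathbf{a}}$ always has the shape ``the value $N$ (from $\tau=0$) together with a negation-invariant multiset'' — the negation-invariance of $\{\mathcal{\widehat{R}}_{\mathbf{a}}(\tau):1\le\tau\le N-1\}$ coming from the involution $\tau\mapsto N-\tau$, with $\tau=N/2$ contributing the value $0$ when $N$ is even — and any multiset of this shape is recovered from its symmetrization $M_{\mathbf{a}}\uplus(-M_{\mathbf{a}})$. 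The displayed identity therefore forces $M_{\mathbf{s'}}=M_{\mathbf{s}}$, i.e.\ nega-decimation preserves the OACF-value distribution.

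For the second claim, I would first note that the group $G$ generated by negation and all nega-cyclic shifts is the cyclic group $\langle\widehat{L}^1\rangle$ of order $2N$: one verifies $\widehat{L}^{\tau_1}\circ\widehat{L}^{\tau_2}=\widehat{L}^{\tau_1+\tau_2}$ (indices mod $2N$), $\widehat{L}^{2N}=\mathrm{id}$, and $\widehat{L}^{N}=(\text{negation})$, so the $G$-orbit of $\mathbf{s}$ is the explicit finite set $\{\widehat{L}^{j}(\mathbf{s}):0\le j<2N\}$. Moreover, applying $\widehat{L}^{j}$ to $\mathbf{s}$ corresponds to applying the ordinary cyclic shift $L^{j}$ to $\mathbf{u}$, so $\tilde{D}_d(\mathbf{s})$ lies in the $G$-orbit of $\mathbf{s}$ if and only if $D_d(\mathbf{u})$ is a cyclic shift of $\mathbf{u}$. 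It then suffices to exhibit one sequence $\mathbf{s}$ and one admissible $d$ for which $D_d(\mathbf{u})$ is not a cyclic shift of $\mathbf{u}$ — equivalently, for which $\tilde{D}_d(\mathbf{s})$ is not one of the $2N$ sequences $\widehat{L}^{j}(\mathbf{s})$ — which is a finite check; the period-$31$ example in Example~\ref{example : Hall 31} is a natural candidate.

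The one delicate point is the sign bookkeeping in the first part: decimation carries $\mathcal{\widehat{R}}_{\mathbf{s}}(\tau)=\tfrac12\mathcal{R}_{\mathbf{u}}(\tau)$ to $\tfrac12\mathcal{R}_{\mathbf{u}}(d\tau)$, and $d\tau\bmod 2N$ need not land in the lower half $\{0,\dots,N-1\}$, so one genuinely needs both structural properties of $\mathbf{u}$ — property (i) to fold the value back into range and to make the symmetrized multisets agree, the OACF anti-symmetry to strip the symmetrization and conclude $M_{\mathbf{s'}}=M_{\mathbf{s}}$. Everything else is a routine index manipulation or, for the second claim, a direct numerical verification.
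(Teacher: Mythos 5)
Your argument is correct and takes essentially the same route as the paper: the OACF-preservation part reduces to PACF-preservation of $\mathbf{u}=\mathbf{s}\|(\mathbf{s}\oplus 1)$ under decimation (after checking $D_d(\mathbf{u})$ is again of the form $\mathbf{s'}\|(\mathbf{s'}\oplus 1)$) and then strips the symmetrization using the two symmetry facts $\mathcal{R}_{\mathbf{u}}(\tau+N)=-\mathcal{R}_{\mathbf{u}}(\tau)$ and $\widehat{\mathcal{R}}_{\mathbf{a}}(N-\tau)=-\widehat{\mathcal{R}}_{\mathbf{a}}(\tau)$, which is exactly the content of Lemmas~\ref{lemma : PACF symmerties}, \ref{lemma : OACF symmerties} and \ref{lemma: OACF-preserving based on Parker's transformation} inlined. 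For the non-reducibility claim the paper likewise relies on a finite computer check on the period-$31$ sequence of Example~\ref{example : Hall 31}; your reduction of the negation/nega-cyclic-shift orbit to the $2N$ sequences $\widehat{L}^{j}(\mathbf{s})$, $0\le j<2N$ (equivalently, cyclic shifts of $\mathbf{u}$) is a clean way to organize that check, but to complete the proof you must actually state its outcome for a concrete sequence rather than only naming the example as a candidate.
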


To prove this theorem, we need to prove several lemmas. We first give the following property of PACF distribution. 

\begin{lemma}\label{lemma : PACF symmerties}
  Let $\mathbf{u}$ be a binary sequence of period $2N$. If the sequence $\mathbf{u}$ is of the form $\mathbf{u} = \mathbf{s}||(\mathbf{s}\oplus 1)$ for a certain binary sequence $\mathbf{s}$ of length N, then $\mathcal{R}_{\mathbf{u}}(\tau + N) = - \mathcal{R}_{\mathbf{u}}(\tau) $ for all $\tau = 0,1,\dots N-1$.
\end{lemma}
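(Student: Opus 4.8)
The plan is to reduce everything to the single structural identity
\[
\mathbf{u}(k+N) \equiv \mathbf{u}(k) + 1 \pmod 2 \qquad \text{for every index } k \text{ (read modulo } 2N),
\]
which is the algebraic content of the concatenation $\mathbf{u} = \mathbf{s}\,||\,(\mathbf{s}\oplus 1)$. First I would verify this identity by splitting into the two ranges $0 \le k < N$ and $N \le k < 2N$. In the first range, $\mathbf{u}(k+N) = \mathbf{s}(k) + 1 = \mathbf{u}(k) + 1$ directly. In the second range, writing $k = N + j$ with $0 \le j < N$, one has $\mathbf{u}(k+N) = \mathbf{u}(2N+j) = \mathbf{u}(j) = \mathbf{s}(j)$, while $\mathbf{u}(k) = \mathbf{u}(N+j) = \mathbf{s}(j) + 1$, so again $\mathbf{u}(k+N) \equiv \mathbf{u}(k) + 1 \pmod 2$.

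Next I would substitute this into the definition of the PACF. Starting from $\mathcal{R}_{\mathbf{u}}(\tau+N) = \sum_{k=0}^{2N-1}(-1)^{\mathbf{u}(k) - \mathbf{u}(k+\tau+N)}$ and applying the parity identity with its index taken to be $k+\tau$, we get $\mathbf{u}(k+\tau+N) \equiv \mathbf{u}(k+\tau) + 1 \pmod 2$. Hence each summand equals $(-1)^{\mathbf{u}(k) - \mathbf{u}(k+\tau) - 1} = -(-1)^{\mathbf{u}(k) - \mathbf{u}(k+\tau)}$, and summing over $k$ yields $\mathcal{R}_{\mathbf{u}}(\tau+N) = -\mathcal{R}_{\mathbf{u}}(\tau)$, which is exactly the claim. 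The restriction $0 \le \tau < N$ is only needed so that $\tau + N$ is a legitimate shift in $\{0, 1, \dots, 2N-1\}$; no further hypotheses on $\mathbf{s}$ enter.

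There is essentially no serious obstacle here: the only point requiring a little care is the index arithmetic modulo $2N$ in the $N \le k < 2N$ case of the parity identity, and the elementary observation that $(-1)^{\pm 1} = -1$ lets a global sign factor out of the whole sum. I would close by noting that this lemma is the PACF-side reflection of Parker's transformation: it says the PACF of any sequence of the form $\mathbf{s}\,||\,(\mathbf{s}\oplus 1)$ is anti-symmetric under a half-period shift, a fact that will be used when transporting decimation of $\mathbf{u}$ back to an operation on $\mathbf{s}$.
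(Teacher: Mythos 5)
Your proposal is correct and follows essentially the same route as the paper: the paper's proof also rewrites $\mathbf{u}(i+\tau+N)$ as $\mathbf{u}(i+\tau)+1$ inside the correlation sum and pulls out the resulting factor of $-1$. The only difference is that you make the half-period identity $\mathbf{u}(k+N)\equiv\mathbf{u}(k)+1 \pmod 2$ explicit and check it on both index ranges, a step the paper leaves implicit.
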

\begin{proof}
  By Eq. \eqref{def: periodic correlation}, we have
  $$
  \begin{aligned}
    \mathcal{{R}}_{\mathbf{u}}(\tau + N) & = \sum_{i=0}^{2N-1}(-1)^{\mathbf{u}(i) - \mathbf{u}(i + N + \tau)} \\
    &= \sum_{i=0}^{2N-1}(-1)^{\mathbf{u}(i) - \mathbf{u}(i + \tau) - 1}\\
    &= -\sum_{i=0}^{2N-1}(-1)^{\mathbf{u}(i) - \mathbf{u}(i +\tau)} \\
    &= -\mathcal{{R}}_{\mathbf{u}}(\tau).
  \end{aligned}
  $$
  The result then follows.
\end{proof}

The OACF distribution of a periodic sequence has the following property.
\begin{lemma}\label{lemma : OACF symmerties}
  Let $\mathbf{a}$ be a binary sequence of period $N$, then $\mathcal{\widehat{R}}_{\mathbf{a}}(\tau) = -\mathcal{\widehat{R}}_{\mathbf{a}}(N-\tau)$ for all $1 \le \tau\le N-1$.
\end{lemma}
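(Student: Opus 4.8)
The plan is to deduce this symmetry from the Parker transformation together with the PACF identity just established in Lemma \ref{lemma : PACF symmerties}. Write $\mathbf{u} = \mathbf{a}||(\mathbf{a}\oplus 1)$, a binary sequence of period $2N$. By Eq.\ \eqref{equ : Parker transformation}, $\widehat{\mathcal{R}}_{\mathbf{a}}(\tau) = \tfrac12\,\mathcal{R}_{\mathbf{u}}(\tau)$ for every $\tau$ with $0\le\tau<N$; since $1\le\tau\le N-1$ also forces $1\le N-\tau\le N-1$, the same identity gives $\widehat{\mathcal{R}}_{\mathbf{a}}(N-\tau) = \tfrac12\,\mathcal{R}_{\mathbf{u}}(N-\tau)$. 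So it suffices to show $\mathcal{R}_{\mathbf{u}}(N-\tau) = -\mathcal{R}_{\mathbf{u}}(\tau)$.

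To get this, I would first invoke the elementary symmetry of the periodic autocorrelation, namely $\mathcal{R}_{\mathbf{u}}(\sigma) = \mathcal{R}_{\mathbf{u}}(2N-\sigma)$ for $0<\sigma<2N$, which follows at once from Eq.\ \eqref{def: periodic correlation} by the substitution $i\mapsto i-\sigma$ in the sum and the fact that $(-1)^{-x}=(-1)^x$. Applying it with $\sigma = N+\tau$, so that $2N-\sigma = N-\tau$, yields $\mathcal{R}_{\mathbf{u}}(N-\tau) = \mathcal{R}_{\mathbf{u}}(N+\tau)$. Now Lemma \ref{lemma : PACF symmerties} applies to the shift $\tau\in\{0,1,\dots,N-1\}$ and gives $\mathcal{R}_{\mathbf{u}}(N+\tau) = -\mathcal{R}_{\mathbf{u}}(\tau)$. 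Chaining the two equalities gives $\mathcal{R}_{\mathbf{u}}(N-\tau) = -\mathcal{R}_{\mathbf{u}}(\tau)$, and dividing by $2$ finishes the proof.

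An alternative, fully self-contained route avoids the Parker transformation altogether: split the defining sum \eqref{def: odd periodic correlation} according to whether $i+\tau<N$ (floor term $0$) or $i+\tau\ge N$ (floor term $1$), so that $\widehat{\mathcal{R}}_{\mathbf{a}}(\tau) = A-B$ with $A=\sum_{i=0}^{N-\tau-1}(-1)^{a(i)-a(i+\tau)}$ and $B=\sum_{i=N-\tau}^{N-1}(-1)^{a(i)-a(i+\tau-N)}$. Carrying out the same split for $\widehat{\mathcal{R}}_{\mathbf{a}}(N-\tau)$ and then reindexing (the substitution $j=i-\tau$ in one piece, $j=i+N-\tau$ in the other, again using $(-1)^{-x}=(-1)^x$) identifies the two pieces of $\widehat{\mathcal{R}}_{\mathbf{a}}(N-\tau)$ with $B$ and $A$ respectively, whence $\widehat{\mathcal{R}}_{\mathbf{a}}(N-\tau) = B-A = -\widehat{\mathcal{R}}_{\mathbf{a}}(\tau)$.

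There is no genuine obstacle here; the statement is a routine symmetry, and I would present the first route for brevity. The only points needing a little care are bookkeeping: keeping the shift index inside the range $[0,N-1]$ where Eq.\ \eqref{equ : Parker transformation} and Lemma \ref{lemma : PACF symmerties} are stated — which is precisely why the hypothesis restricts to $1\le\tau\le N-1$ — and, in the direct approach, tracking the wrap-around (floor) term and the summation limits correctly after the reindexing.
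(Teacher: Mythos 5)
Your proposal is correct, but your preferred (first) route is genuinely different from the paper's. The paper proves Lemma \ref{lemma : OACF symmerties} directly from the definition \eqref{def: odd periodic correlation}: it splits the sum according to whether the floor term is $0$ or $1$, reindexes, and matches the two pieces of $\mathcal{\widehat{R}}_{\mathbf{a}}(N-\tau)$ with those of $\mathcal{\widehat{R}}_{\mathbf{a}}(\tau)$ with signs exchanged --- this is exactly the ``alternative, fully self-contained route'' you sketch in your second paragraph, so that part coincides with the paper. Your main argument instead goes through the Parker transformation: form $\mathbf{u}=\mathbf{a}||(\mathbf{a}\oplus 1)$, use Eq.~\eqref{equ : Parker transformation} to reduce the claim to $\mathcal{R}_{\mathbf{u}}(N-\tau)=-\mathcal{R}_{\mathbf{u}}(\tau)$, and obtain the latter by chaining the elementary even symmetry $\mathcal{R}_{\mathbf{u}}(\sigma)=\mathcal{R}_{\mathbf{u}}(2N-\sigma)$ (not stated in the paper, but your one-line substitution argument is fine) with Lemma \ref{lemma : PACF symmerties}. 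This is valid and non-circular, since both Eq.~\eqref{equ : Parker transformation} and Lemma \ref{lemma : PACF symmerties} are established independently of this lemma, and your range bookkeeping ($1\le N-\tau\le N-1$, $\tau\in\{0,\dots,N-1\}$ for Lemma \ref{lemma : PACF symmerties}) is exactly the care needed. What each approach buys: the paper's computation is self-contained and shows the OACF symmetry is intrinsic to any binary sequence, independent of the Parker machinery; your route is shorter given the lemmas already in hand and makes transparent that the OACF symmetry of $\mathbf{a}$ is just the conjunction of the ordinary PACF reflection symmetry and the antiperiodicity of $\mathcal{R}_{\mathbf{u}}$ under shift by $N$, which fits nicely with how Lemma \ref{lemma: OACF-preserving based on Parker's transformation} is used later.
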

\begin{proof}
  For $1\le \tau \le N-1$, by Eq. \eqref{def: odd periodic correlation}, we have
  $$
  \begin{aligned} 
    \mathcal{\widehat{R}}_{\mathbf{a}}(\tau) &= \sum_{i=0}^{N-1-\tau}(-1)^{\mathbf{a}(i) - \mathbf{a}(i+\tau)} + \sum_{i=N-\tau}^{N-1}(-1)^{\mathbf{a}(i) - \mathbf{a}(i+\tau)-1} \\
    &= \sum_{i=0}^{N-1-\tau}(-1)^{\mathbf{a}(i) - \mathbf{a}(i+\tau)} - \sum_{i=0}^{\tau-1}(-1)^{\mathbf{a}(i-\tau) - \mathbf{a}(i)}.
  \end{aligned}
  $$
 Since $1\le N-\tau\le N-1$, it then follows that 
  $$
  \begin{aligned}
    \mathcal{\widehat{R}}_{\mathbf{a}}(N-\tau) &= \sum_{i=0}^{\tau-1}(-1)^{\mathbf{a}(i) - \mathbf{a}(i-\tau)} - \sum_{i=0}^{N-\tau-1}(-1)^{\mathbf{a}(i+\tau) - \mathbf{a}(i)}\\
    &= -\mathcal{\widehat{R}}_{\mathbf{a}}(\tau)
  \end{aligned}
  $$
The proof is then completed.
\end{proof}

By Lemma \ref{lemma : PACF symmerties}, Lemma \ref{lemma : OACF symmerties}, and Eq. \eqref{equ : Parker transformation}, if two sequences $\mathbf{u}=\mathbf{s}||(\mathbf{s}\oplus 1)$, $\mathbf{u'} = \mathbf{s'}||(\mathbf{s'}\oplus 1)$ have the same PACF value distribution, then $\mathbf{s}$ and $\mathbf{s'}$ have the same OACF value distribution. This result is stated in the following lemma, which will play a key role in proving Theorem \ref{thm: nega-decimation}.

\begin{lemma} \label{lemma: OACF-preserving based on Parker's transformation}
  Let $\mathbf{u} = \mathbf{s}||(\mathbf{s}\oplus 1)$ for a binary sequence $\mathbf{s}$ with period $N$. Let $\mathbf{u'} = \mathbf{s'} ||(\mathbf{s'} \oplus 1)$ for a binary sequence $\mathbf{s'}$ of period $N$. The distribution of OACF values of $\mathbf{s'}$ are identical to that of $\mathbf{s}$ if the distribution of PACF values of $\mathbf{u}$ and $\mathbf{u'}$ are identical.
\end{lemma}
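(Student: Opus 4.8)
The plan is to show that the distribution of OACF values of $\mathbf{s}$ is a function of the distribution of PACF values of $\mathbf{u} = \mathbf{s}\,||\,(\mathbf{s}\oplus 1)$ alone, by exhibiting an explicit recovery procedure that reads only the PACF distribution; applying the same procedure to $\mathbf{s}'$ and $\mathbf{u}'$ then settles the lemma at once.

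First I would assemble the two structural inputs. By Eq.~\eqref{equ : Parker transformation}, $\mathcal{R}_{\mathbf{u}}(\tau) = 2\,\mathcal{\widehat{R}}_{\mathbf{s}}(\tau)$ for $0\le \tau <N$, and by Lemma~\ref{lemma : PACF symmerties}, $\mathcal{R}_{\mathbf{u}}(\tau + N) = -\mathcal{R}_{\mathbf{u}}(\tau)$ for the same range of $\tau$. Writing $A$ for the multiset $\{\mathcal{R}_{\mathbf{u}}(\tau) : 0\le \tau < N\}$, which is $2$ times the OACF distribution of $\mathbf{s}$, it follows that the PACF distribution of $\mathbf{u}$ equals $A \uplus (-A)$. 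Next, Lemma~\ref{lemma : OACF symmerties} says the involution $\tau \mapsto N-\tau$ of $\{1,\dots,N-1\}$ carries $\mathcal{\widehat{R}}_{\mathbf{s}}(\tau)$ to $-\mathcal{\widehat{R}}_{\mathbf{s}}(\tau)$, so the sub-multiset $C := \{\mathcal{\widehat{R}}_{\mathbf{s}}(\tau): 1\le \tau\le N-1\}$ of the OACF distribution is negation-symmetric, i.e.\ $-C = C$. Hence $A = \{2N\}\uplus 2C$ with $-(2C)=2C$, and therefore the PACF distribution of $\mathbf{u}$ equals $\{2N,\,-2N\} \uplus (2C) \uplus (2C)$.

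From this identity the recovery is purely mechanical, and this is really the whole proof: $2N$ is the largest value occurring in the PACF distribution of $\mathbf{u}$ (each $\mathcal{R}_{\mathbf{u}}(\tau)$ is a sum of $2N$ terms equal to $\pm 1$, attaining $2N$ at $\tau = 0$), so $N$ is determined; deleting one occurrence of $2N$ and one of $-2N$ leaves the multiset $(2C)\uplus(2C)$, in which every value has even multiplicity; halving all multiplicities recovers $2C$, halving all values recovers $C$, and adjoining the atom $N$ recovers the OACF distribution $\{N\}\uplus C$ of $\mathbf{s}$. Every step uses only the PACF distribution of $\mathbf{u}$, so the OACF distribution of $\mathbf{s}$ is the same function of the PACF distribution of $\mathbf{u}$ as the OACF distribution of $\mathbf{s}'$ is of the PACF distribution of $\mathbf{u}'$; since the latter two distributions agree by hypothesis, so do the two OACF distributions.

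The delicate point — the step I would single out as the crux — is that forming the folded multiset $A\uplus(-A)$ does not, in general, let one reconstruct $A$; what makes it reversible here is that $A$ is itself negation-symmetric apart from the single atom $2N$ (precisely Lemma~\ref{lemma : OACF symmerties} transported through Parker's identity), so $A\uplus(-A)$ simplifies to $\{2N,-2N\}$ adjoined to a multiset all of whose multiplicities are even, and ``doubling every multiplicity'' is invertible. In writing this up I would handle the harmless degenerate cases explicitly: $2C$ may itself contain $\pm 2N$, which merely raises their multiplicities without affecting the extraction of $\{2N,-2N\}$; and when $N$ is even the point $\tau = N/2$ is the fixed point of the involution, forcing $\mathcal{\widehat{R}}_{\mathbf{s}}(N/2)=0$, consistent with $-C=C$.
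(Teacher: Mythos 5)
Your proof is correct and follows essentially the same route as the paper: both rest on Lemma~\ref{lemma : PACF symmerties}, Lemma~\ref{lemma : OACF symmerties} and Eq.~\eqref{equ : Parker transformation} to write the PACF distribution of $\mathbf{u}$ as $\{*\,2N,-2N\,*\}$ together with a doubled, negation-symmetric multiset coming from the nonzero-shift OACF values, and then invert this correspondence. Your only departures are presentational — you package the inversion as an explicit recovery procedure (making the even-multiplicity ``crux'' and the degenerate cases explicit) and treat odd and even $N$ uniformly, where the paper argues the two parities separately.
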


\begin{proof}
  Define mutisets $S_{\mathbf{u}} = \{ *  \mathcal{R}_{\mathbf{u}}(i) : i =0,1,\dots, 2N-1 *  \}$ and $S_{\mathbf{u'}} = \{  * \mathcal{R}_{\mathbf{u'}}(i) : i =0,1,\dots, 2N-1 *  \}$, respectively.

  By Lemma \ref{lemma : PACF symmerties} and Eq. \eqref{equ : Parker transformation}, we have 

  \begin{eqnarray*}
    S_{\mathbf{u}}
    & = & \{ * 2N,-2N * \} \cup \{ * \mathcal{R}_{\mathbf{u}}(i) : i=1,\dots, N-1   *\} \\
    &  & \cup \{  * -\mathcal{R}_{\mathbf{u}}(i) : i=1,\dots, N-1 *  \}  \\
    & = & \{ * 2N,-2N *  \} \cup \{ *  2\mathcal{\widehat{R}}_{\mathbf{s}}(i) : i=1,\dots, N-1  * \}  \\
    &  & \cup \{  * -2\mathcal{\widehat{R}}_{\mathbf{s}}(i) : i=1,\dots, N-1 *   \},
  \end{eqnarray*}
  and 
  \begin{eqnarray*}
    S_{\mathbf{u'}} & =  & \{ * 2N,-2N * \} \cup \{ * 2\mathcal{\widehat{R}}_{\mathbf{s'}}(i) : i=1,\dots, N-1 * \} \\
    & & \cup \{ *  -2\mathcal{\widehat{R}}_{\mathbf{s'}}(i) : i=1,\dots, N-1   *\}.
  \end{eqnarray*}
  When $N$ is odd, by Lemma \ref{lemma : OACF symmerties}, we have
  \begin{eqnarray*}
     \lefteqn{\{ *  2\mathcal{\widehat{R}}_{\mathbf{s}}(i) : i=1,\dots, N-1   *\}} \\
    & = &\{ * 2\mathcal{\widehat{R}}_{\mathbf{s}}(i) : i=1,\dots, (N-1)/2  * \} \cup \\
      & & \{  * -2\mathcal{\widehat{R}}_{\mathbf{s}}(i) : i=1,\dots, (N-1)/2   *\},
  \end{eqnarray*}
  and 
  \begin{eqnarray*}
    \lefteqn{\{ *  2\mathcal{\widehat{R}}_{\mathbf{s'}}(i) : i=1,\dots, N-1   *\}}\\
    & = & \{  * 2\mathcal{\widehat{R}}_{\mathbf{s'}}(i) : i=1,\dots, (N-1)/2   *\} \cup \\
     & & \{ *  -2\mathcal{\widehat{R}}_{\mathbf{s'}}(i) : i=1,\dots, (N-1)/2   *\}.
  \end{eqnarray*}
  It follows that 
  $$\{ *\mathcal{R}_{u}(i) : i=1,\dots, N-1 * \} = \{ *  -\mathcal{R}_{u}(i) : i=1,\dots, N-1 *\} $$ 
  and 
  $$\{* \mathcal{R}_{u'}(i) : i=1,\dots, N-1*\} = \{*-\mathcal{R}_{u'}(i) : i=1,\dots, N-1 *\}.$$ 
  Therefore, we have 
  $$
  S_{\mathbf{u}} = \{*2N,-2N*\} \cup \{*\mathcal{R}_{u}(i)^2 : i= 1,\dots, N-1*\},
  $$ 
  and 
  $$
  S_{\mathbf{u'}} = \{*2N,-2N*\} \cup \{*\mathcal{R}_{u'}(i)^2 : i= 1,\dots, N-1*\},
  $$
  where the exponent 2 is the multiplicity. 
  The relation $S_{\mathbf{u}} = S_{\mathbf{u'}}$ forces that $\{* \mathcal{R}_{u}(i) : i= 1,\dots N-1* \} = \{* \mathcal{R}_{u'}(i) : i= 1,\dots N-1 *\}$. From Eq. \eqref{equ : Parker transformation} and $\mathcal{R}_{\mathbf{u}}(0) = \mathcal{R}_{\mathbf{u'}}(0) =2N$, it follows that the OACF values of $\mathbf{s}$ are identical to the OACF values of $\mathbf{s'}$.
 
  When $N$ is even, $\mathcal{\widehat{R}}_{\mathbf{u}}(N/2) = - \mathcal{\widehat{R}}_{\mathbf{u}}(N/2) = 0$. By Lemma \ref{lemma : OACF symmerties}, we have 
  \begin{eqnarray*}
     \lefteqn{\{*2\mathcal{\widehat{R}}_{\mathbf{s}}(i) : i=1,\dots, N-1*\}}\\
    & = &\{* 0 *\} \cup \{* 2\mathcal{\widehat{R}}_{\mathbf{s}}(i) : i=1,\dots, N/2 -1* \} \cup \\
    &  & \{*-2\mathcal{\widehat{R}}_{\mathbf{s}}(i) : i=1,\dots, N/2 -1 *\},
  \end{eqnarray*}
 
  and 
  \begin{eqnarray*}
    \lefteqn{\{ * 2\mathcal{\widehat{R}}_{\mathbf{s'}}(i) : i=1,\dots, N-1 *\}}\\
    & = &\{*0*\} \cup \{*2\mathcal{\widehat{R}}_{\mathbf{s'}}(i) : i=1,\dots, N/2 -1 *\} \cup \\
    & & \{*-2\mathcal{\widehat{R}}_{\mathbf{s'}}(i) : i=1,\dots, N/2 -1 *\}.
  \end{eqnarray*}

  This implies that 
  $$\{*\mathcal{R}_{u}(i) : i=1,\dots, N-1*\} = \{*-\mathcal{R}_{u}(i) : i=1,\dots, N-1*\}, $$ 
  and 
  $$\{*\mathcal{R}_{u'}(i) : i=1,\dots, N-1*\} = \{*-\mathcal{R}_{u'}(i) : i=1,\dots, N-1*\}.$$ 
  The rest of the proof is the same as the case that $N$ is odd.  
\end{proof}

Now we are ready to give a proof of Theorem \ref{thm: nega-decimation}. The proof of Theorem \ref{thm: nega-decimation} is based on the PACF distribution of $\mathbf{u}$ and Eq. \eqref{equ : Parker transformation}.

\begin{proof}[Proof of Theorem \ref{thm: nega-decimation}]
Let $\mathbf{u} = \mathbf{s}||(s\oplus 1)$, where $\mathbf{s}$ is a periodic binary sequence with period $N$. For a positive integer $d$ that coprime with $2N$, denote $D_d(\mathbf{u})$ by $\mathbf{u'}$, we have the following observation:
$$\mathbf{u'}(i+N) = \mathbf{u}(di + dN) = \mathbf{u}(di) + 1 = \mathbf{u'}+1\text{, } 0 \le i < N.$$

Thus, the sequence $\mathbf{u'}$ must has the form $\mathbf{s'}||(\mathbf{s'}\oplus 1)$ for a $\mathbf{s'}$ with period $N$.
By Lemma \ref{lemma: OACF-preserving based on Parker's transformation}, we know that nega-$d$-decimation is OACF preserving.

It remains to show that nega-decimation cannot be obtained by the previously known operations in general. In Example \ref{example : Hall 31}, by computer search, the sequence $\mathbf{s'}$ cannot be obtained from any combination of nega-cyclic shift, negation. Thus the nega-decimation is indeed a new operation and the proof of Theorem \ref{thm: nega-decimation} is complete.
\end{proof}

\begin{remark}
  As far as we are concerned, there are three known OACF-preserving operations: negation, nega-cyclic shift and nega-decimation. 
\end{remark}

\section{Classification of Parker's 16 cases} \label{Sec: section 3}

In \cite{parker2001even}, Parker constructed two classes of odd-optimal binary sequences of period $2p$ and further conjectured 16 classes of binary sequences' OACF values of period $4p$ with $p$ prime. These conjectures were recently confirmed by Li and Yang \cite{li2019three}. In this section, we first review Parker's 16 classes of sequences and then classify them into 8 ones by the newly defined nega-decimation operation.

\subsection{Parker's sequences}
Let $p$ be a prime of the form $p=x^2 + 4y^2 = 4f + 1$, where $x, y, f$ are integers. Fix $\alpha$ as a generator of the finite field of order $p$, denoted as $GF(p)$, and the cyclotomic classes of order 4 with respect to $GF(p)$ are $\{D_0,D_1,D_2,D_3\}$. Define $S = \{G \times \{0\} | G \subseteq \mathbb{Z}_8\} \cup \{\{n\} \times A_n | n \in \mathbb{Z}_8\}$ where $A_n = \cup_{k \in  I_n}D_k $ with $  I_n \subseteq \mathbb{Z}_{4}$. 
By the Chinese Remainder Theorem (CRT), $\mathbb{Z}_8\times \mathbb{Z}_p$ is isomorphic to $\mathbb{Z}_{8p}$. Let $\eta$ be an isomorphism from $\mathbb{Z}_8\times \mathbb{Z}_p$ to $\mathbb{Z}_{8p}$ and $N=4p$. The characteristic sequence $\mathbf{u}$ of $\eta(S)$ satisfies $\mathbf{u}(i) = \mathbf{u}(i + N) + 1$ for $i = 0,1,\dots N-1$ if we require that $A_{n + 4} = \cup_{t \notin  I_n} D_t$ and $j + 4 \in (\notin) G$ for all $j \notin (\in) G$, for $j,n=0,1,2,3$ \cite{parker2001even}. Thus, $\mathbf{u}$ is completely described by $G', A_0, A_1, A_2, A_3$, where $G' = \{g \in G, g \le 4\}$. The OACF values of Parker's 16 classes of sequences were confirmed by Li and Yang \cite{li2019three} via an interleaving argument, and the results are listed in Tables \ref{tab:conjecture 1}, \ref{tab:conjecture 2}, \ref{tab:conjecture 3}. Each class of sequences is denoted by $\mathbf{s_i},$ for $ i = 1,2,\dots, 16$, and the support of $\mathbf{s}_i||(\mathbf{s}_i\oplus 1)$ is determined by $G'$ and $\gamma $, where $\gamma = A_0, A_1, A_2, A_3$. The sets $C_i$'s for $i = 1, 2, \ldots, 6$ are defined as follows : 

$$
C_1 = D_0 \cup D_1,\text{ } C_2 = D_0 \cup D_2, \text{ } C_3 = D_0 \cup D_3,
$$
$$
C_4 = D_1 \cup D_2, \text{ }C_5 = D_1 \cup D_3, \text{ } C_6 = D_2 \cup D_3.
$$

\begin{table}[htbp]
  \caption{Construction 1}
   \centering
   \resizebox{0.7\linewidth}{!}{
   \begin{tabular}{c|c|c|c|c}
     \hline
        Notation  &    $G'$     & $\gamma$     & OACF values  & $f$\\
            \hline
          $\mathbf{s}_1$ &  $\{2\}$     & $C_3,C_4,C_1,C_1$ & $\{0,\pm 2, \pm 4, \pm(2x+4y)\}$    & even  \\
          $\mathbf{s}_2$ &  $\{0,1,2\}$ & $C_4,C_3,C_1,C_1$ & $\{0,\pm 2, \pm 4, \pm(2x-4y)\}$    & even  \\
          $\mathbf{s}_3$ &  $\{3\}$     & $C_6,C_1,C_4,C_4$ & $\{0,\pm 2, \pm 4, \pm(2x-4y)\}$    & even  \\
          $\mathbf{s}_4$ &  $\{0,1,3\}$ & $C_1,C_6,C_4,C_4$ & $\{0,\pm 2, \pm 4, \pm(2x+4y)\}$    & even  \\
     \hline
   \end{tabular}}
   \label{tab:conjecture 1}
 \end{table}

 \begin{table}[htbp]
  \caption{Construction 2}
   \centering
   \resizebox{0.7\linewidth}{!}{
   \begin{tabular}{c|c|c|c|c}
     \hline
        Notation  &    $G'$     & $\gamma$     & OACF values  & $f$\\
            \hline
          $\mathbf{s}_5$ &  $\{0\}$     & $C_3,C_4,C_1,C_1$ & $\{0,\pm 2, \pm 4, \pm(2x+4y)\}$    & odd  \\
          $\mathbf{s}_6$ &  $\{1\}$ & $C_4,C_3,C_1,C_1$ & $\{0,\pm 2, \pm 4, \pm(2x-4y)\}$    & odd  \\
          $\mathbf{s}_7$ &  $\{0,2,3\}$     & $C_6,C_1,C_4,C_4$ & $\{0,\pm 2, \pm 4, \pm(2x-4y)\}$    & odd  \\
          $\mathbf{s}_8$ &  $\{1,2,3\}$ & $C_1,C_6,C_4,C_4$ & $\{0,\pm 2, \pm 4, \pm(2x+4y)\}$    & odd  \\
     \hline
   \end{tabular}}
   \label{tab:conjecture 2}
 \end{table}

 \begin{table}[htbp]
  \caption{Construction 3}
   \centering
   \resizebox{0.7\linewidth}{!}{
   \begin{tabular}{c|c|c|c|c}
     \hline
        Notation  &    $G'$     & $\gamma$     & OACF values  & $f$\\
            \hline
          $\mathbf{s}_9$ &  $\{0\}$     & $C_5,C_2,C_1,C_1$ & $\{0,\pm 2, \pm 2y, \pm 4y, \pm (4y+8)\}$    & odd  \\
          $\mathbf{s}_{10}$ &  $\{0\}$ & $C_2,C_5,C_1,C_1$ & $\{0,\pm 2, \pm 2y, \pm 4y, \pm (4y-8)\}$    & odd  \\
          $\mathbf{s}_{11}$ &  $\{0\}$     & $C_5,C_2,C_4,C_4$ & $\{0,\pm 2, \pm 2y, \pm 4y, \pm (4y-8)\}$    & odd  \\
          $\mathbf{s}_{12}$ &  $\{0\}$ & $C_2,C_5,C_4,C_4$ & $\{0,\pm 2, \pm 2y, \pm 4y, \pm (4y+8)\}$    & odd  \\
          $\mathbf{s}_{13}$ &  $\{0,3\}$     & $C_1,C_2,C_2,C_1$ & $\{0,\pm 2, \pm 2y, \pm 4y, \pm (4y+8)\}$    & odd  \\
          $\mathbf{s}_{14}$ &  $\{0,3\}$ & $C_1,C_5,C_5,C_1$ & $\{0,\pm 2, \pm 2y, \pm 4y, \pm (4y-8)\}$    & odd  \\
          $\mathbf{s}_{15}$ &  $\{0,3\}$     & $C_4,C_2,C_2,C_4$ & $\{0,\pm 2, \pm 2y, \pm 4y, \pm (4y-8)\}$    & odd  \\
          $\mathbf{s}_{16}$ &  $\{0,3\}$ & $C_4,C_5,C_5,C_4$ & $\{0,\pm 2, \pm 2y, \pm 4y, \pm (4y+8)\}$    & odd  \\
     \hline
   \end{tabular}}
   \label{tab:conjecture 3}
 \end{table}

\subsection{Classification of Parker's sequences}

We give a proof of the case that $\mathbf{s_4}$ can be obtained by $\mathbf{s_1}$ via a combination of the negation and nega-decimation. The other 14 Parker's sequences can be handled in a similar way, and the proof is thus omitted. 
\begin{theorem} \label{thm : s_1 equiva s_4}
  Let $\mathbf{s}_1,\mathbf{s}_4$ be the sequences defined in Table \ref{tab:conjecture 1}. Then the sequence $\mathbf{s}_4$ can be obtained from $\mathbf{s}_1$ via a combination of negation, and nega-decimation. More precisely, $\mathbf{s}_4 = \tilde{D}_{\phi^{-1}((1,\alpha^3))}(\mathbf{s}_1\oplus 1)$, where $\alpha$ is the fixed generator in $GF(p)$ and $\phi$ is an isomorphism from $\mathbb{Z}_{8p}$ to $\mathbb{Z}_8 \times \mathbb{Z}_p$ that maps $x$ to $(x \mod 8, x \mod p)$.
\end{theorem}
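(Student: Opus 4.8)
The plan is to transport the claimed identity through Parker's transformation and then into $\mathbb{Z}_8\times\mathbb{Z}_p$, where both sequences are characteristic sequences and the two operations become elementary. Write $\mathbf{u}_1=\mathbf{s}_1||(\mathbf{s}_1\oplus 1)$ and $\mathbf{u}_4=\mathbf{s}_4||(\mathbf{s}_4\oplus 1)$, both of period $8p$. First I would observe that the Parker transform of $\mathbf{s}_1\oplus 1$ equals $\mathbf{u}_1\oplus 1$, and that, by the computation in the proof of Theorem~\ref{thm: nega-decimation}, for any $d$ coprime with $8p$ the sequence $D_d(\mathbf{u}_1\oplus 1)$ is again of the form $\mathbf{t}||(\mathbf{t}\oplus 1)$ and has $\tilde{D}_d(\mathbf{s}_1\oplus 1)$ as its first half. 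Hence it suffices to establish the single identity of period-$8p$ sequences $D_d(\mathbf{u}_1\oplus 1)=\mathbf{u}_4$ for $d=\phi^{-1}((1,\alpha^3))$; comparing first halves then yields the theorem.

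Next I would set up two dictionaries. By construction $\mathbf{u}_1$ and $\mathbf{u}_4$ are the characteristic sequences of $\eta(S_1)$ and $\eta(S_4)$, where $S_i\subseteq\mathbb{Z}_8\times\mathbb{Z}_p$ is encoded by the pair $(G_i',\gamma_i)$ of Table~\ref{tab:conjecture 1} together with the two compatibility constraints $A_{n+4}=\bigcup_{t\notin I_n}D_t$ and $j+4\in G\iff j\notin G$. (a) Negation sends a support $T\subseteq\mathbb{Z}_{8p}$ to its complement; using $A_{n+4}=A_n^c$, one checks $S_1^c$ is again admissible, with $G$ replaced by $\mathbb{Z}_8\setminus G$ (so $G'$ by $\{0,1,2,3\}\setminus G'$) and each $A_n$ by its complement in $\mathbb{Z}_p^{*}$ (so each block $C_i$ in $\gamma$ by $C_i^{c}$). (b) Decimation by $d$ sends $T$ to $\phi(d)^{-1}T$, which acts coordinatewise: by a unit of $\mathbb{Z}_8$ on the first coordinate (here multiplication by $1$, so $G'$ is fixed) and by a power of $\alpha$ on the second coordinate, which shifts the indices of the cyclotomic cosets $D_0,D_1,D_2,D_3$ cyclically and hence permutes $C_1,\dots,C_6$ by an explicit rule. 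The structural fact I would isolate at the outset is that neither operation mixes the layer $\{(n,0):n\in\mathbb{Z}_8\}$, which carries $G$, with the layers $\{n\}\times\mathbb{Z}_p^{*}$, which carry the $A_n$; thus the composite acts separately on $G'$ and on the quadruple $(A_0,A_1,A_2,A_3)$.

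With these dictionaries the proof reduces to a finite check, which I would run as follows. On the $G$-side, complementation turns $G_1'=\{2\}$ into $\{0,1,3\}$ and multiplication by $1$ leaves it there, and $\{0,1,3\}=G_4'$. On the coset-side, complementation turns $\gamma_1=(C_3,C_4,C_1,C_1)$ into $(C_4,C_3,C_6,C_6)$, and applying the coset permutation induced by $\phi(d)$ entry by entry yields $(C_1,C_6,C_4,C_4)=\gamma_4$. Since the data $(G',\gamma)$ together with the two constraints determine the subset of $\mathbb{Z}_8\times\mathbb{Z}_p$ uniquely, the transformed sequence is exactly $\mathbf{u}_4$, whence $\tilde{D}_d(\mathbf{s}_1\oplus 1)=\mathbf{s}_4$.

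I expect the one real difficulty to be bookkeeping: keeping the two compatibility constraints straight while commuting the negation past the decimation, and pinning down which cyclic shift of the indices of the $D_k$ is induced by multiplication by $\phi(d)$ under the chosen normalization of the cyclotomic cosets --- the sign of this shift is precisely what fixes the exponent $3$ (rather than $-3$) in $\phi^{-1}((1,\alpha^3))$. Once the convention is fixed the computation is routine, and the remaining fourteen sequences of Tables~\ref{tab:conjecture 1}--\ref{tab:conjecture 3} are handled by the same two-step recipe --- a negation and/or a nega-cyclic shift followed by one nega-decimation --- so their proofs can be omitted.
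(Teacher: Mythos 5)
Your proposal follows essentially the same route as the paper's proof: pass to $\mathbf{u}_1=\mathbf{s}_1||(\mathbf{s}_1\oplus 1)$, transport supports into $\mathbb{Z}_8\times\mathbb{Z}_p$ via $\phi$, use that negation complements the support while $d$-decimation multiplies it coordinatewise by a unit (fixing the $\{(n,0)\}$ layer carrying $G'$ and permuting the cyclotomic blocks $C_i$), and verify the resulting data coincide with those of $\mathbf{u}_4$ --- exactly the paper's explicit support computation, just organized through the $(G',\gamma)$ bookkeeping. The one wrinkle, whether the support of $D_d$ is $\phi(d)\cdot T$ or $\phi(d)^{-1}\cdot T$ (which is what decides the exponent $3$ versus $-3$ in $\phi^{-1}((1,\alpha^3))$), is shared with the paper itself, which multiplies by $(1,\alpha^3)$ directly, and you explicitly flag it as a convention to pin down, so it does not affect the assessment.
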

\begin{proof}
  Define $\mathbf{u}_1 = \mathbf{s}_1 || (\mathbf{s}_1 \oplus 1)$ and $\mathbf{u}_4 = \mathbf{s}_4 || (\mathbf{s}_4 \oplus 1) $.  Let $S_{\mathbf{u}_1}$, $S_{\mathbf{u}_4}$ be the supports of $\mathbf{u}_1,\mathbf{u}_2$ respectively, and
  $$
  \begin{aligned}
    \phi(S_{\mathbf{u}_1}) = \text{ } & C_{\mathbf{u}_1}  \\
                      = \text{ } & \{(2,0),(4,0), (5,0),(7,0)\}  \\
                     & \cup (0,C_3)\cup (1,C_4) \cup (2,C_1) \cup (3,C_1) \\
                     & \cup (4,C_4)\cup (5,C_3) \cup (6,C_6) \cup (7,C_6), \\
    \phi(S_{\mathbf{u}_4}) = \text{ } & C_{\mathbf{u}_4} \\
                    = \text{ } &  \{(0,0),(1,0),(3,0),(6,0)\}  \\
                    & \cup   (0,C_1)\cup (1,C_6) \cup (2,C_4) \cup (3,C_4) \\
                    & \cup   (4,C_6)\cup (5,C_1) \cup (6,C_3) \cup (7,C_3). \\
  \end{aligned}
  $$
  Firstly, we apply negation on $\mathbf{u}_1$. Then the support of $\mathbf{u}_1\oplus 1$ that is denoted by $S_{\mathbf{u}_1\oplus 1}$ satisfies 
  \begin{eqnarray*}
    \lefteqn{\phi(S_{\mathbf{u}_1\oplus 1})} \\
      & = & \mathbb{Z}_8 \times \mathbb{Z}_p  \setminus C_{\mathbf{u}_1} \\
      & = & \{(0,0),(1,0),(3,0),(6,0)\}  \\
      & & \cup (0,C_4)\cup (1,C_3) \cup (2,C_6) \cup (3,C_6) \\
      & & \cup (4,C_3)\cup (5,C_4) \cup (6,C_1) \cup (7,C_1).
  \end{eqnarray*}
 We then apply the $d$-decimation on $\mathbf{u}_1\oplus 1$, where $\phi(d) = (1,\alpha^3)$. The support of $D_d(\mathbf{u}_1\oplus 1)$ that is denoted by $S_{D_d(\mathbf{u}_1\oplus 1)}$ satisfies
  \begin{eqnarray*}
    \lefteqn{\phi(S_{D_d(\mathbf{u}_1\oplus 1)})} \\
    & = & (1,\alpha^3) \cdot \phi(S_{\mathbf{u}_1\oplus 1}) \\
    & = & \{(1\times0,\alpha^3 \cdot 0),(1\times1,\alpha^3 \cdot 0), \\
    &  &  (1\times3,\alpha^3 \cdot 0),(1\times6,\alpha^3 \cdot 0)\} \\
    &  & \cup  (1 \times 0,\alpha^3 \cdot C_4)\cup (1\times 1,\alpha^3 \cdot C_3) \\
    & & \cup  (1\times 2,\alpha^3 \cdot C_6) \cup (1\times 3,\alpha^3 \cdot C_6) \\
    & & \cup (1\times 4,\alpha^3 \times C_3) \cup (1\times 5,\alpha^3 \cdot C_4) \\
    & & \cup  (1\times 6,\alpha^3 \cdot C_1) \cup (1 \times 7,\alpha^3 \cdot C_1) \\
    & = &  \{(0,0),(1,0),(3,0),(6,0)\} \\
    &  & \cup  (0,C_1)\cup (1,C_6) \cup (2,C_4) \cup (3,C_4) \\
    &  & \cup  (4,C_6)\cup (5,C_1) \cup (6,C_3) \cup (7,C_3) \\
    & = & C_{\mathbf{u}_4}.
  \end{eqnarray*}

  This implies that $S_{D_d(\mathbf{u}_1\oplus 1)} = S_{\mathbf{u}_4}$ since $\phi$ is bijective. Therefore, $\mathbf{s}_4 = \tilde{D}_{\phi^{-1}((1,\alpha^3))}(\mathbf{s}_1\oplus 1)$. This means that the sequence $\mathbf{s}_4$ can be obtained from $\mathbf{s}_1$ via a combination of negation and nega-decimation.
\end{proof}

\begin{remark}
  The interleaving-based construction in \cite{li2019three} can be represented by the subsets in $\mathbb{Z}_8\times\mathbb{Z}_p$ via an isomorphism. The subset has the same format as $C_{\mathbf{u}_1}$ in the proof of Theorem \ref{thm : s_1 equiva s_4}. Thus, we can classify the interleaving based construction by the method in Theorem \ref{thm : s_1 equiva s_4}.
\end{remark}

By applying the same method in Theorem \ref{thm : s_1 equiva s_4}, we can classify the 16 classes of binary sequences by Parker into 8 classes in Table \ref{tab : Classification of Parker's Sequences}, where $\phi$ and $\alpha$ are defined as in Theorem \ref{thm : s_1 equiva s_4}.

\begin{table}[htbp]
  \caption{Classification of Parker's 16 cases}
   \centering
  \begin{tabular}{c|c|c}
    \hline
  Class Index & Sequences & Relation\\
  \hline
  1 & $\mathbf{s_1,s_4}$ & $\mathbf{s}_4 = \tilde{D}_{\phi^{-1}((1,\alpha^3))}(\mathbf{s}_1\oplus 1)$ \\
  2 & $\mathbf{s_2,s_3}$ & $\mathbf{s}_3 = \tilde{D}_{\phi^{-1}((1,\alpha^3))}(\mathbf{s}_2\oplus 1)$ \\
  3 & $\mathbf{s_5,s_8}$ & $\mathbf{s}_8 = \tilde{D}_{\phi^{-1}((1,\alpha^3))}(\mathbf{s}_5\oplus 1)$ \\
  4 & $\mathbf{s_6,s_7}$ & $\mathbf{s}_7 = \tilde{D}_{\phi^{-1}((1,\alpha^3))}(\mathbf{s}_6\oplus 1)$ \\
  5 & $\mathbf{s_9,s_{12}}$ & $\mathbf{s}_{12} = \tilde{D}_{\phi^{-1}((1,\alpha^3))}(\mathbf{s}_9)$ \\
  6 & $\mathbf{s_{10},s_{11}}$ & $\mathbf{s}_{11} = \tilde{D}_{\phi^{-1}((1,\alpha))}(\mathbf{s}_{10})$ \\
  7 & $\mathbf{s_{13},s_{16}}$ & $\mathbf{s}_{16} = \tilde{D}_{\phi^{-1}((1,\alpha))}(\mathbf{s}_{13})$ \\
  8 & $\mathbf{s_{14},s_{15}}$ & $\mathbf{s}_{15} = \tilde{D}_{\phi^{-1}((1,\alpha))}(\mathbf{s}_{14})$ \\
  \hline
  \end{tabular}
  \label{tab : Classification of Parker's Sequences}
  \end{table}

\section{Conclusion} \label{Sec: Conclusion}
In this paper, we defined a new OACF-preserving operation called nega-decimation. The nega-decimation operation enables us to classify the 16 Parker's classes of sequences into 8 ones. The interleaving sequences defined by Li and Yang \cite{li2019three} can also be classified by the new operation. Moreover, we may possibly be able to fully classify all constructions of binary sequences based on Parker's transformation. This may constitute one of future directions.
\bibliographystyle{ieeetr}  

\end{document}